\documentclass[runningheads]{llncs}
%
%

\usepackage{amssymb}
\usepackage{amsmath}
\usepackage{comment}
\usepackage[disable]{todonotes}
\usepackage{inputenc}
\usepackage{shuffle}
\usepackage{multirow}
\usepackage{listings}
\usepackage{mathabx}
\usepackage{hyperref}

\usepackage{shuffle}

\usepackage{tikz}
\usetikzlibrary{positioning,shadows,arrows}
\usetikzlibrary{arrows,automata,positioning}
\usetikzlibrary{shapes,shapes.geometric,arrows,fit,calc,positioning,automata,}

\usepackage{diagbox}
\usepackage{slashbox}
\usepackage{tikz}
\usetikzlibrary{matrix}

\lstset{%
  language=[LaTeX]TeX,
  backgroundcolor=\color{gray!10},
  basicstyle=\ttfamily,
  breaklines=true,
  columns=fullflexible
}


\newcommand{\Syn}{\operatorname{Syn}}

\usepackage{apxproof}
\theoremstyle{plain}
\newtheoremrep{theorem}{Theorem}
\newtheoremrep{proposition}[theorem]{Proposition}
\newtheoremrep{lemma}[theorem]{Lemma}
\newtheoremrep{claim}[theorem]{Claim}
\newtheoremrep{conjecture}[theorem]{Conjecture}
\newtheoremrep{corollary}[theorem]{Corollary}
\newtheoremrep{definition}[theorem]{Definition}



  



 

\DeclareFontFamily{U}{bigshuffle}{}
\DeclareFontShape{U}{bigshuffle}{m}{n}{
  <5-8> s*[1.7] shuffle7
  <8->  s*[1.7] shuffle10
}{}
\DeclareSymbolFont{BigShuffle}{U}{bigshuffle}{m}{n}
\DeclareMathSymbol\bigshuffle{\mathop}{BigShuffle}{"001}
\DeclareMathSymbol\bigcshuffle{\mathop}{BigShuffle}{"002}

\newcommand{\NP}{\textsf{NP}}
\newcommand{\PSPACE}{\textsf{PSPACE}}


\begin{document}
%
\title{Sync-Maximal Permutation Groups Equal Primitive Permutation Groups}

%
%
\author{Stefan Hoffmann\orcidID{0000-0002-7866-075X}}
\authorrunning{S. Hoffmann}
%
\institute{Informatikwissenschaften, FB IV, 
  Universit\"at Trier,  Universitätsring 15, 54296~Trier, Germany, 
  \email{hoffmanns@informatik.uni-trier.de}}
\maketitle              
\begin{abstract}
 The set of synchronizing words of a given $n$-state automaton 
 forms a regular language recognizable by an automaton with $2^n - n$
 states. The size of a recognizing automaton for the set of synchronizing
 words is linked to computational problems related to synchronization
 and to the length of synchronizing words. Hence, it is natural
 to investigate synchronizing automata extremal with this property, i.e., 
 such that the minimal deterministic automaton for the set of synchronizing words
 has $2^n - n$ states.
 The sync-maximal permutation groups have been introduced in [{\sc S. Hoffmann}, Completely Reachable Automata, Primitive Groups and the State Complexity of the Set of Synchronizing Words, LATA 2021]
 by stipulating that an associated automaton to the group and a non-permutation
 has this extremal property.
 The definition is in analogy with the synchronizing groups and analog to a characterization
 of primitivity obtained in the mentioned work. The precise relation to other classes of groups
 was mentioned as an open problem. Here, we solve this open problem by showing that
 the sync-maximal groups are precisely the primitive groups. 
 Our result gives a new characterization of the primitive groups.
 Lastly, we explore an alternative and stronger definition than sync-maximality.

\keywords{finite automata \and synchronization \and set of synchronizing words \and primitive permutation groups \and sync-maximal groups} 
\end{abstract}

\section{Introduction}
%

An automaton is synchronizing if it admits a word that drives every state into a single definite state.
Synchronizing automata
have a range of applications in software testing, circuit synthesis, communication engineering and the like, see~\cite{Kohavi70,San2005,Vol2008}. The \v{C}ern\'y conjecture states that the length
of a shortest synchronizing word for a deterministic complete automaton with $n$ states
has length at most $(n - 1)^2$~\cite{Cerny64,CernyPirickaRosenauerova71}.
The best bound up to now is cubic~\cite{shitov2019}. This conjecture is one of the most famous open
problems in combinatorial automata theory~\cite{Vol2008}.
More specifically, the following bounds have been established:
\begin{center}
\begin{tabular}{ l@{\hskip 0.2in} l@{\hskip 0.2in} l }
    $2^n - n - 1$ & & (1964, \v{C}ern\'y~\cite{Cerny64}) \\ [2pt]
    $\frac{1}{2}n^3 - \frac{3}{2} n^2 + n + 1$ & & (1966, Starke~\cite{Starke66}) \\   [2pt]
    $\frac{1}{2} n^3 - n^2 + \frac{n}{2}$ & & (1970, Kohavi~\cite{Kohavi70}) \\  [2pt]
    $\frac{1}{3} n^3 - n^2 - \frac{1}{3} n + 6$ & & (1970, Kfourny~\cite{Kfoury70}) \\ [2pt]
    $\frac{1}{3} n^3 - \frac{3}{2} n^2 + \frac{25}{6} n - 4$ & & (1971, \v{C}ern\'y et al.~\cite{CernyPirickaRosenauerova71}) \\ [2pt]
    $\frac{7}{27} n^3 - \frac{17}{18} n^2 + \frac{17}{6} n - 3$ & $n \equiv 0 \pmod{3}$ & (1977, Pin~\cite{Pin77a}) \\ [2pt]
    $\left(\frac{1}{2} + \frac{\pi}{36}\right) n^3 + o(n^3)$ & & (1981, Pin~\cite{Pin81a}) \\ [2pt]
    $\frac{1}{6}n^3 - \frac{1}{6}n - 1$ & & (1983, Pin/Frankl~\cite{Frankl82,Pin83a}) \\ [2pt]
 $\alpha n^3 + o(n^3)$ & $\alpha \approx 0.1664$  & (2018, Szyku\l a~\cite{szykula2018}) \\ [2pt]
 $\alpha n^3 + o(n^3)$ & $\alpha \le 0.1654$ & (2019, Shitov~\cite{shitov2019})    
\end{tabular}
\end{center}
Furthermore, the \v{C}ern\'y conjecture~\cite{Cerny64} has been confirmed for a variety
of classes of automata, just to name a few (without further explanation): circular automata~\cite{Dubuc96,Dubuc98,Pin78a}, oriented or (generalized) monotonic automata~\cite{AnanichevVolkov03,AnanichevV05,Epp90},
automata with a sink state~\cite{Rys96}, solvable and commutative automata~\cite{Fernau019,Rys96,Rystsov97},
Eulerian automata~\cite{JKari01b}, automata preserving a chain of partial orders~\cite{Volkov09},
automata whose transition monoid contains a $\operatorname{\mathbb{Q} I}$-group~\cite{ArCamSt2017,ArnoldS06},
certain one-cluster automata~\cite{Steinberg11a}, 
automata that cannot recognize $\{a,b\}^*ab\{a,b\}^*$~\cite{AlmeidaS09}, aperiodic automata~\cite{Trahtman07}, certain aperiodically $1$-contracting automata~\cite{Don16}
and automata having letters of a certain rank~\cite{BerlinkovS16}.

\v{C}ern\'y~\cite{Cerny64} gave an infinite family of synchronizing $n$-state automata with shortest synchronizing
words of length $(n-1)^2$. Families of
synchronizing automata with shortest synchronizing words
close to $(n-1)^2$ are called \emph{slowly synchronizing}.
There are only a few families of slowly synchronizing automata known, 
see~\cite{Vol2008}.


The set of synchronizing words of an $n$-state automaton is a regular language and can
be recognized by an automaton of size $2^n - n$~\cite{Cerny64,Maslennikova14,Starke66}. A property shared by most 
families of slowly synchronizing automata is that for them, every automaton for the set of synchronizing
words needs exponentially many states~\cite{DBLP:conf/lata/Hoffmann21a,Maslennikova14,Maslennikova19}.
Note that, of course, by taking an automaton and adjoining a letter mapping
every state to a single state, as the extremal property of the set of synchronizing words is preserved
by adding letters, automata whose sets of synchronizing words have exponential state complexity in the number of states are not necessarily slowly synchronizing. However, the evidence supports the conjecture
that slowly synchronizing automata have this extremal property.

Testing if an automaton is synchronizing is doable in polynomial time~\cite{Cerny64,Vol2008}.
However, computing a shortest synchronizing word is hard, more precisely,
the decision variant of this problem is \NP-complete~\cite{Epp90,Rys80}, even for automata over a fixed binary alphabet.
Moreover, variants of the synchronization problem for partial automata, or when restricting
the set of allowed reset words, could even be \PSPACE-complete~\cite{FernauGHHVW19,Martyugin12}.

The size of a smallest automaton for the set of synchronizing words seems to be also related to the difficulty
to compute a shortest synchronizing word, or a synchronizing word subject to certain constraints.
A first result in this direction was the realization
that for commutative automata, i.e., where each permutation of an input word leads to the same 
state, and a fixed alphabet, we do not have such an exponential blowup for the size of the minimal automaton for set of synchronizing words~\cite{HoffmannCocoonExtended,Hoffmann2021constrsync}. As a consequence, the constrained
synchronization problem for commutative input automata and a fixed constraint is always solvable in polynomial time~\cite{HoffmannCocoonExtended,Hoffmann2021constrsync}. Note that for commutative input automata and a fixed alphabet,
computing a shortest synchronizing can be done in polynomial time~\cite{Martyugin09}

So, it is natural to focus on synchronizing automata such that the smallest automaton for the set of synchronizing
words has maximal possible size.

After realizing that for certain special cases for which the \v{C}ern\'y conjecture
was established~\cite{ArnoldS06,Dubuc96,Dubuc98,Pin78a,Rystsov2000}, this was due to the reason that certain
permutation groups were contained in the transformation monoid of the automaton,
the notion of synchronizing permutation groups was introduced~\cite{ArCamSt2017,ArnoldS06}.
These are permutation groups with the property that if we adjoin a non-permutation
to it, the generated transformation monoid contains a constant map. It was shown that
these groups are contained strictly between the $2$-transitive and the primitive groups~\cite{ArCamSt2017,Neumann09}.
Meanwhile, a lot of related permutation groups
have been introduced or linked to the synchronizing groups, for example: spreading, separating, $\mathbb Q$I-groups.
See~\cite{ArCamSt2017} for a good survey and definitions. Furthermore, permutation groups in general
have been investigated with respect to the properties of resulting transformation
monoid if non-permutations were added~\cite{ARAUJO2021a,araujo2019,AraujoCameron2014,ArCamSt2017}.

The completely reachable automata have been introduced by Volkov \& Bondar~\cite{DBLP:conf/dcfs/BondarV16,DBLP:conf/dlt/BondarV18}.
This is a stronger notion than being synchronizing by stipulating that, starting from the whole state set,
not only some singleton set is reachable, but every non-empty subset of states is reachable by some word.
In fact, this property was also previously observed for many classes of synchronizing
automata~\cite{Don16,DBLP:conf/lata/Hoffmann21a,Maslennikova14,Maslennikova19}.

It has been proven in~\cite{lata1Hoffmann21} that a permutation group of degree $n$ is primitive if and only
if in the transformation monoid generated by the group and
an arbitrary non-permutation with an image of size $n - 1$, there exists, for every non-empty subset of the permutation domain, 
an element mapping the whole permutation domain to this subset, or said differently that
an associated automaton is completely reachable.
In the same paper~\cite{lata1Hoffmann21} the sync-maximal permutation groups were introduced
by stipulating that, for an associated $n$-state automaton, the smallest automaton for the set of synchronizing words\todo{nur exponentiell fordern, $2^{n/2}$ geht auch? was kommt raus} has size $2^n - n$. It was shown that
the sync-maximal permutation groups are contained between the $2$-homogeneous and the primitive permutation groups,
and it was posed as an open problem if they are properly contained between them, and if so, what
the precise relation to other permutation groups is. 

Here, we solve this open problem by showing that the sync-maximal permutation groups are precisely the primitive
permutation groups, which also yields new characterizations of the primitive permutation groups.

\section{Preliminaries}
\label{sec:introduction}

 \todo{any entfernen!}


\todo{$|w|$ entfernt, brauche ich auch nicht?}

Let $\Sigma$ be a finite set of symbols,
 called an \emph{alphabet}. 
 By $\Sigma^{\ast}$, we denote
the \emph{set of all finite sequences}, i.e., of all words or strings. The \emph{empty word}, i.e., the 
finite sequence of length zero, is denoted by~$\varepsilon$. 
The subsets of $\Sigma^{\ast}$
are called \emph{languages}. 
For $n > 0$, we set $[n] = \{0,1,\ldots, n-1\}$ and $[0] = \emptyset$.
For a set $X$, we denote the \emph{power set} of $X$ by $\mathcal P(X)$, i.e,
the set of all subsets of $X$.
Every function $f : X \to Y$
induces a function $\hat f : \mathcal P(X) \to \mathcal P(Y)$
by setting $\hat f(Z) := \{ f(z) \mid z \in Z \}$.\todo{brauche ich funktionen auf mengen allgemein?}
Here, we denote this extension also by~$f$.
Let $k \ge 1$. A \emph{$k$-subset} $Y \subseteq X$ is a finite set of cardinality~$k$.
A $1$-set is also called a \emph{singleton set}.
For functions $f : A \to B$ and $g : B \to C$, the \emph{functional composition}
$gf : A \to C$ is the function $(gf)(x) = g(f(x))$, i.e., the function on the right is applied first\footnote{In group theory, usually the other convention is adopted, but we stick to the convention most often seen in formal
language theory.}. A function $f : X \to X$ is called \emph{idempotent}
if $f^2 = f$.

A \emph{semi-automaton} is a triple $\mathcal A = (\Sigma, Q, \delta)$
where $\Sigma$ is the \emph{input alphabet}, $Q$ the finite set of \emph{states}
and $\delta : Q \times \Sigma \to Q$ the \emph{transition function}.
The transition function $\delta : Q \times \Sigma \to Q$
extends to a transition function on words $\delta^{\ast} : Q \times \Sigma^{\ast} \to Q$
by setting $\delta^{\ast}(s, \varepsilon) := s$ and $\delta^{\ast}(s, wa) := \delta(\delta^{\ast}(s, w), a)$
for $s \in Q$, $a \in \Sigma$ and $w \in \Sigma^{\ast}$. In the remainder we drop
the distinction between both functions and also denote this extension by $\delta$.
For $S \subseteq Q$ and $w \in \Sigma^*$, we write $\delta(S, w) = \{ \delta(s, w) \mid s \in S \}$.
A state $q \in Q$ is \emph{reachable} from a state $p \in Q$, if
there exists $w \in \Sigma^*$ such that $\delta(p, w) = q$.
The \emph{set of synchronizing words} is $\Syn(\mathcal A) = \{ w \in \Sigma^* \mid |\delta(Q, w)| = 1 \}$.
A semi-automaton $\mathcal A = (\Sigma, Q, \delta)$
is called \emph{synchronizing}, if $\Syn(\mathcal A) \ne \emptyset$.
We call $\mathcal A$ \emph{completely reachable}, if for each non-empty $S \subseteq Q$
there exists a word $w \in \Sigma^*$ such that $\delta(Q, w) = S$.
Note that every completely reachable automaton is synchronizing.

A \emph{(finite) automaton} is a quintuple $\mathcal A = (\Sigma, Q, \delta, q_0, F)$
where $(\Sigma, Q, \delta)$ is a semi-automaton, $q_0 \in Q$ is the \emph{start state}
and $F \subseteq Q$ the set of \emph{final states}. The \emph{languages recognized (by $\mathcal A$)}
is $L(\mathcal A) = \{w \in \Sigma^* \mid \delta(q_0, w) \in F \}$. 
An automaton with a start state and a set of final states is used
for the description of languages, whereas, when we consider a semi-automaton,
we are only concerned with the transition structure of the automaton itself.
When the context is clear, we also call semi-automata simply automata
and concepts and notions that do not use the start state or the final state
carry over from semi-automata to automata and vice versa.

A language recognized by a finite automaton is called \emph{regular}.
An automaton $\mathcal A$ has the least number of states to recognize a language~\cite{HopUll79}, i.e., is a \emph{minimal automaton}, 
if and only if every state is reachable from the start state and
every two distinct states $p, q \in Q$ are \emph{distinguishable}, i.e., 
there exists $w \in \Sigma^*$ such that precisely one of the states $\delta(p, w)$
and $\delta(q, w)$ is a final state, but not the other. A minimal automaton
is unique up to isomorphism~\cite{HopUll79}, where two automata are isomorphic if one can be obtained from the other
by renaming of states. Hence, we can speak about the minimal automaton.


If $\mathcal A = (\Sigma, Q, \delta)$
is a semi-automaton with a non-empty state set, then define $\mathcal P_{\mathcal A} = (\Sigma, \mathcal P(Q) \setminus\{\emptyset\}, \hat \delta, Q, F)$ where $\hat \delta : \mathcal P(Q) \times \Sigma \to \mathcal P(Q)$ is the extension $\hat \delta(S, u) = \{ \delta(s,u) \mid s \in S \}$, for $S \subseteq Q$ and $u \in \Sigma^*$,
of $\delta$ to subsets of states
and $F = \{\{q\} \mid q \in Q \}$. As for functions $f : X \to Y$ introduced above,
we drop the distinction between $\delta$ and $\hat \delta$
and denote both functions by $\delta$.
We have, 
$\Syn(\mathcal A) = \{ w \in \Sigma^* \mid \delta(Q, w) \in F \}$.
The states in $F$ can be merged to a single state
to get a recognizing automaton for $\Syn(\mathcal A)$.
So, $\Syn(\mathcal A)$ is recognizable by an
automaton with $2^{|Q|} - |Q|$ states.

Let $n \ge 0$. 
Denote by $\mathcal S_n$ the \emph{symmetric group on $[n]$}, i.e., the group of all permutations of $[n]$.
A \emph{permutation group (of degree $n$)}
is a subgroup of $\mathcal S_n$.
A permutation group $G$ over $[n]$ is \emph{primitive}, if it preserves no non-trivial equivalence
relation\footnote{The trivial equivalence relations on $[n]$ are $[n] \times [n]$
and $\{ (x,x) \mid x \in [n] \}$.}
on $[n]$, i.e., for no non-trivial equivalence relation $\sim \subseteq [n]\times [n]$
we have $p \sim q$ if and only if $g(p) \sim g(q)$ for all $g \in G$ and $p, q \in [n]$ (recall that the elements of $G$
are functions from $[n]$ to $[n]$).
Equivalently, a permutation group is primitive if there
does not exist a non-empty proper subset $\Delta \subseteq [n]$
with $|\Delta| > 1$ such that, for every $g \in G$,
we have $g(\Delta) = \Delta$ or $g(\Delta)\cap \Delta = \emptyset$.
A permutation group $G$ over $[n]$ is called \emph{k}-homogeneous for some $k \ge 1$,\todo{def $k$-hom entfernen?}
if for every two $k$-subsets $S,T$ of $[n]$, there exists $g \in G$
such that $g(S) = T$. A \emph{transitive} permutation group is the same
as a $1$-homogeneous permutation group.
Note that here, all permutation groups with $n \le 2$ are primitive, and for $n > 2$
every primitive group is transitive. Because of this, some authors exclude
the trivial group for $n = 2$ from being primitive.
A permutation group $G$ over $[n]$ is called \emph{$k$-transitive} for some $k \ge 1$,
if for two $k$-tuples $(p_1, \ldots, p_k), (q_1, \ldots, q_k) \in [n]^k$,
there exists $g \in G$
such that $(g(p_1),\ldots, g(p_k)) = (q_1, \ldots, g_k)$.

By $\mathcal T_n$, we denote the set of all maps on $[n]$.
The elements of $[n]$ are also called \emph{points} in this context.
A submonoid of $\mathcal T_n$ for some $n$ is called a \emph{transformation monoid}.
If the set $U$ is a submonoid (or a subgroup) of $\mathcal T_n$ (or $\mathcal S_n$)
we denote this by $U \le \mathcal T_n$ (or $U \le \mathcal S_n$).
For a set $A \subseteq \mathcal T_n$ (or $A \subseteq \mathcal S_n$),
we denote by $\langle A \rangle$ the submonoid (or the subgroup) generated
by $A$. If $A = \{a_1, \ldots, a_m\}$ we also write $\langle a_1, \ldots, a_m \rangle = \langle A \rangle$.
Let $\mathcal A = (\Sigma, Q, \delta)$ be a semi-automaton and
for $w \in \Sigma^*$ define $\delta_w : Q \to Q$
by $\delta_w(q) = \delta(q,w)$ for all $q \in Q$.
Then, we can associate with $\mathcal A$ the \emph{transformation monoid
of the automaton} $\mathcal T_{\mathcal A} = \{ \delta_w \mid w \in \Sigma^* \}$, where we can identify $Q$
with $[n]$ for $n = |Q|$.
We have $\mathcal T_{\mathcal A} = \langle \{ \delta_x \mid x \in \Sigma \} \rangle$.
The \emph{rank} of a map $f : [n] \to [n]$ is the cardinality of its image.
For a given semi-automaton $\mathcal A = (\Sigma, Q, \delta)$, the \emph{rank of a word} $w \in \Sigma^*$
is the rank of $\delta_w$. 
We call two sets $S, T \subseteq [n]$ \emph{distinguishable}
in a transformation monoid $M \le \mathcal T_n$ if there exists an element in $M$
mapping precisely one of both sets to a singleton set 
and the other to a non-singleton set.



\todo{def state collapsen entfernt.}

The following implies that we can check if a given semi-automaton is synchronizing
by only looking at pairs of states~\cite{Cerny64,Vol2008}. The proof basically
works by repeatedly collapsing pairs of states to construct a synchronizing word~\cite{Vol2008}. It implies a
polynomial time procedure to check synchronizability~\cite{Vol2008}.

\begin{theorem}[\v{C}ern\'y~\cite{Cerny64,Vol2008}]
\label{thm:pair_sync_criterion}
 Let $\mathcal A = (\Sigma, Q, \delta)$.
 Then, $\mathcal A$ is synchronizing if and only if for each $p, q \in Q$
 there exists $w \in \Sigma^*$ such that $\delta(p, w) = \delta(q, w)$.
 Hence, a transformation monoid $M \le \mathcal T_n$
 contains a constant map
 if and only if every two points can be mapped to a single point
 by elements in $M$. 
\end{theorem}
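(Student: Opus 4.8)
The plan is to prove the stated equivalence in the two directions separately and then obtain the final ``Hence'' sentence as an immediate specialization. The forward direction is trivial: if $\mathcal A$ is synchronizing, fix a word $w$ with $\delta(Q, w) = \{r\}$ for some $r \in Q$; then for any $p, q \in Q$ we have $\delta(p, w) = r = \delta(q, w)$, so the same $w$ witnesses the pairwise condition for every pair.

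For the converse I would argue by iteratively shrinking the reachable image. Set $S_0 = Q$. As long as $|S_i| \ge 2$, choose two distinct states $p, q \in S_i$; by hypothesis there is a word $v_i \in \Sigma^*$ with $\delta(p, v_i) = \delta(q, v_i)$. Since $\delta_{v_i}$ sends the two distinct elements $p, q$ of $S_i$ to a common state, the image $S_{i+1} := \delta(S_i, v_i)$ satisfies $|S_{i+1}| \le |S_i| - 1$. As the cardinality strictly decreases at each step, after at most $|Q| - 1$ iterations we reach some $S_m$ with $|S_m| = 1$. Using the identity $\delta(Q, uv) = \delta(\delta(Q, u), v)$ together with a straightforward induction on the number of factors, one gets $\delta(Q, v_0 v_1 \cdots v_{j-1}) = S_j$, so the concatenation $w = v_0 v_1 \cdots v_{m-1}$ satisfies $|\delta(Q, w)| = 1$ and $\mathcal A$ is synchronizing.

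For the final sentence I would view a transformation monoid $M \le \mathcal T_n$ as the semi-automaton $\mathcal B = (M, [n], \delta)$ with $\delta(q, m) = m(q)$. Because $M$ is closed under composition and contains the identity, the transformation monoid associated with $\mathcal B$ equals $M$ itself, and an element of $M$ is a constant map if and only if the corresponding word is synchronizing for $\mathcal B$. The hypothesis that every two points can be mapped to a single point by an element of $M$ is exactly the pairwise condition for $\mathcal B$, so the claim follows directly from the equivalence just established.

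The argument is essentially routine; the only points requiring care are the bookkeeping of the left-to-right composition convention used by the extended transition function and verifying that the collapsing step genuinely yields a strict decrease in the size of the image. This strict-decrease observation, guaranteeing termination after finitely many steps, is the modest obstacle on which the whole converse direction rests.
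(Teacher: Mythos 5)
Your proposal is correct and follows exactly the approach the paper indicates for this classical result, namely repeatedly collapsing pairs of states to build a synchronizing word (the paper itself only cites \cite{Cerny64,Vol2008} and sketches this one-line idea rather than giving a full proof). The strict-decrease bookkeeping and the reformulation of the monoid statement as a semi-automaton over the alphabet $M$ are both handled properly.
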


The next result appears in~\cite{AraujoCameron2014,ArCamSt2017} and despite it was never clearly spelled out by Rystsov himself, it is implicitly present in arguments used in~\cite{Rystsov2000}.

\begin{theorem}[Rystsov~\cite{AraujoCameron2014,ArCamSt2017,Rystsov2000}]
\label{thm:rystsov} 
 A permutation group $G$ on $[n]$ 
 is primitive if and only if, for every map $f : [n] \to [n]$
 of rank $n - 1$, the transformation monoid
 $\langle G \cup \{f\} \rangle$ contains a constant map.
\end{theorem}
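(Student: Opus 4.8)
The plan is to prove both implications through the pairwise synchronization criterion of Theorem~\ref{thm:pair_sync_criterion}: for $M = \langle G \cup \{f\}\rangle$ the monoid contains a constant map if and only if every pair $\{p,q\}\subseteq[n]$ is \emph{mergeable}, i.e.\ some $m\in M$ satisfies $m(p)=m(q)$. Since every group with $n\le 2$ is primitive and the associated monoid trivially synchronizes, I assume $n\ge 3$, so primitive implies transitive. A rank-$(n-1)$ map $f$ has a single nontrivial kernel class, which I call its \emph{collapsed pair} $\{a,b\}$; it is injective off that pair. I record that mergeability is $G$-invariant: if $m(p)=m(q)$ then $(mg^{-1})(g(p))=(mg^{-1})(g(q))$, so the whole $G$-orbit of a mergeable pair is mergeable.

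For the direction ``not primitive $\Rightarrow$ some $f$ fails'', if $G$ is trivial I take $f$ idempotent with $f(b)=a$ and $f$ the identity elsewhere, so that $\langle f\rangle=\{\mathrm{id},f\}$ has minimal rank $n-1\ge 2$ and hence no constant map. Otherwise non-primitivity yields a nontrivial $G$-invariant partition $\mathcal B$ of $[n]$ (an imprimitivity block system if $G$ is transitive, the orbit partition if $G$ is intransitive), with at least two classes and a class $B$ with $|B|\ge 2$. Choosing distinct $a,b\in B$ and the idempotent $f$ with $f(b)=a$, identity elsewhere, I get that $f$ maps every class of $\mathcal B$ into itself while each $g\in G$ permutes the classes. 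Thus every element of $M$ maps each class into a single class and induces a well-defined transformation of $\mathcal B$; the induced monoid is generated by permutations (from $G$) and the identity (from $f$), hence is a \emph{group} acting on $\ge 2$ classes and has no constant map. A constant map on $[n]$ would induce one on $\mathcal B$, so $M$ has none.

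For the converse I take $G$ primitive and $f$ of rank $n-1$ with collapsed pair $\{a,b\}$, and show every pair is mergeable. First I use that the orbital graph $\Gamma$ on $[n]$ with edges $\{\,\{g(a),g(b)\}:g\in G\,\}$ is connected: its connected components form a $G$-invariant partition, which primitivity forces to be trivial, and since $\Gamma$ has an edge the only option is a single component. Each edge $\{u,v\}$ with $g(a)=u,\ g(b)=v$ is mergeable via $fg^{-1}$. Next I pass to a minimal-rank analysis. Let $r$ be the minimal rank in $M$ and suppose $r\ge 2$; pick an idempotent $t\in M$ of rank $r$ (a power of a minimal element), with image $I$ and kernel classes $C_1,\dots,C_r$. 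Minimality makes $tgt$ have rank $r$ for all $g\in G$, which forces each $g(I)$ to be a transversal of $\ker t$; two points in a common minimal image are never mergeable (merging them would give rank $<r$), whereas two points in a common kernel class are mergeable. Writing $X$ for the graph of non-mergeable pairs, every minimal image is thus an $r$-clique of $X$ and $\ker t$ is a proper $r$-colouring, so $\omega(X)=\chi(X)=r$; by the clique–coclique bound for the vertex-transitive graph $X$ all colour classes have the common size $n/r\ge 2$. Moreover $X$ is $G$-invariant, has an edge (as $r\ge2$), is not complete (since $\{a,b\}$ is mergeable), and $\Gamma\subseteq\overline X$.

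The decisive step, and the part I expect to be the main obstacle, is to contradict the existence of such a $G$-invariant $X$ using that $f$ has defect exactly one. Primitivity alone does \emph{not} forbid a graph with $\omega=\chi$ — primitive groups can fail to synchronize against maps of smaller rank — so the argument must genuinely exploit that $f$ collapses a single pair and is otherwise bijective. My goal would be to analyse the action of $f$ on the uniform colour-class (minimal-kernel) partition, using that $f$ carries minimal $X$-cliques to minimal $X$-cliques and that the single fracture $\{a,b\}$ propagates along the connected $\Gamma\subseteq\overline X$, in order to conclude that the minimal-kernel partition is itself $G$-invariant. As its $r\ge2$ classes all have size $n/r\ge 2$, this would contradict primitivity, forcing $r=1$ and hence a constant map in $M$. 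Turning the local, defect-one collapse of $f$ into this global $G$-congruence is the hard technical core; together with the first direction it then yields the claimed equivalence.
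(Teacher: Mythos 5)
The paper does not actually prove Theorem~\ref{thm:rystsov}; it is quoted from the literature (Rystsov, Ara\'ujo--Cameron), so there is no in-paper proof to compare against. Judged on its own merits, your proposal handles the easy direction completely and correctly: the block-preserving idempotent $f$ together with the induced action on the $G$-invariant partition is exactly the standard argument. Your setup for the hard direction is also sound and is the standard machinery: reduction to pairwise mergeability via Theorem~\ref{thm:pair_sync_criterion}, connectivity of the orbital graph $\Gamma$, the minimal-rank idempotent $t$, the observations that minimal images are cliques and minimal kernels are proper $r$-colourings of the non-mergeable graph $X$, and the clique--coclique conclusion that the colour classes all have size $n/r\ge 2$.

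However, the proof is not finished, and the ``decisive step'' you flag is a genuine gap rather than a formality. Moreover, the target you announce for it --- showing that the minimal-kernel partition is itself $G$-invariant --- is not quite the right statement: there is no reason a single minimal kernel should be fixed by $G$; the group merely permutes the (possibly many) minimal kernels. The correct finish, available directly from your setup, is different: every minimal kernel is a proper $r$-colouring of the vertex-transitive graph $X$, so by your own clique--coclique argument \emph{every} minimal kernel is uniform with classes of size $n/r\ge 2$. Taking $f$ idempotent with $f(a)=f(b)=a$, the element $tf$ (apply $f$ first) still has rank $r$ because no class of $\ker t$ is a singleton, so $\ker(tf)=f^{-1}(\ker t)$ is again a minimal kernel and hence uniform; but if $a$ and $b$ lay in different classes of $\ker t$, then $f^{-1}$ would move the point $b$ from one class to another and produce classes of sizes $n/r-1$ and $n/r+1$. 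Hence $a$ and $b$ lie in one class of $\ker t$, and applying the same argument to $tg$ for each $g\in G$ places every edge $g(\{a,b\})$ of $\Gamma$ inside a single class of $\ker t$. Since $\Gamma$ is connected and $\ker t$ has $r\ge 2$ classes, this is a contradiction, forcing $r=1$. Until some such argument is supplied, the forward implication --- the substantive half of Rystsov's theorem --- remains unproved in your write-up.
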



 In~\cite{lata1Hoffmann21}, the following characterization
 of the primitive permutation groups, connecting them 
 to completely reachable automata, was shown.
 
 \begin{theorem}[Hoffmann~\cite{lata1Hoffmann21}]
 \label{thm:compl_reach}
  Let $G = \langle g_1, \ldots, g_k \rangle \le S_n$.
  Then the following are equivalent:
  \begin{enumerate}
  \item $G$ is primitive;
  
  \item for every transformation
  $f : [n] \to [n]$ of rank $n - 1$ and non-empty $S \subseteq [n]$,
  there exists $g \in \langle G \cup \{f\} \rangle$
  such that $g([n]) = S$;
  
  \item for every transformation
  $f : [n] \to [n]$ of rank $n - 1$, the semi-automaton
  $\mathcal A = (\Sigma, Q, \delta)$,
  with $\Sigma = \{g_1, \ldots, g_k,f\}$, $Q = [n]$
  and $\delta(m, g) = g(m)$ for all $m \in [n]$
  and all $g \in \Sigma$, is completely reachable.
  \end{enumerate}
 \end{theorem}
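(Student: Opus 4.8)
The plan is to prove the three conditions equivalent by handling the implications separately, with essentially all of the work in $(1) \Rightarrow (2)$. First observe that $(2)$ and $(3)$ are two phrasings of the same statement: the transformation monoid of the semi-automaton in $(3)$ is exactly $\langle G \cup \{f\}\rangle = \mathcal T_{\mathcal A}$, and an element $g$ with $g([n]) = S$ is the same object as a word $w$ with $\delta([n],w) = S$, so complete reachability of $\mathcal A$ is literally condition $(2)$. For $(2) \Rightarrow (1)$ I would invoke Theorem~\ref{thm:rystsov}: specialising $S$ in $(2)$ to a singleton $\{p\}$ produces an element of $\langle G \cup \{f\}\rangle$ whose image is a single point, i.e.\ a constant map; since this holds for every rank-$(n-1)$ map $f$, Rystsov's criterion yields that $G$ is primitive.

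The core is $(1) \Rightarrow (2)$. Fix a map $f$ of rank $n-1$. Such an $f$ has a unique point $e \notin f([n])$ and a unique pair $\{d_1,d_2\}$ with $f(d_1) = f(d_2) =: v$, and $f$ restricts to a bijection of $[n]\setminus\{d_1,d_2\}$ onto $[n]\setminus\{e,v\}$; in particular $v \neq e$. I would prove reachability of every non-empty $S$ by downward induction on $|S|$. The set $[n]$ is reachable by the identity, so assume every $k$-subset is reachable and let $Y$ be a target of size $k-1$. Since permutations preserve cardinality and only $f$ can lower it, the natural way to produce $Y$ is to take a reachable $k$-subset $U \supseteq \{d_1,d_2\}$, apply $f$ (which drops the size by one), and then a permutation $g \in G$. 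A direct computation gives $f(U) = \{v\} \cup f(U \setminus \{d_1,d_2\})$, and as $U$ ranges over all $k$-subsets containing the pair, $f(U)$ ranges exactly over the $(k-1)$-subsets that contain $v$ and avoid $e$. Hence $Y$ is reachable as soon as some $g \in G$ satisfies $g(v) \in Y$ and $g(e) \notin Y$.

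Everything therefore reduces to the following \emph{Key Claim}, which is the real obstacle and the only place primitivity enters: for any two distinct points $a \neq b$ (here $a = v$, $b = e$) and any proper non-empty $Y \subseteq [n]$, there is $g \in G$ with $g(a) \in Y$ and $g(b) \notin Y$. I would establish this via the orbital-graph characterisation of primitivity (Higman): as $G$ is primitive it is transitive, and the orbital digraph $O$ given by the $G$-orbit of the ordered pair $(a,b)$ has a connected underlying graph. By vertex- and arc-transitivity every vertex of $([n],O)$ has equal in- and out-degree, so for the cut $(Y,\, [n]\setminus Y)$ the number of arcs from $Y$ to its complement equals the number in the reverse direction; connectivity forces at least one crossing edge, hence at least one arc $(x,y) \in O$ with $x \in Y$ and $y \notin Y$. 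Writing $(x,y) = (g(a), g(b))$ delivers the required $g$. Feeding $a = v$, $b = e$ back into the induction completes $(1) \Rightarrow (2)$.

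The only subtlety is transitivity, on which both the base step and the Key Claim rest: primitivity supplies it for $n \ge 3$, while the degenerate cases $n \le 2$ (where the convention in the preliminaries calls every group primitive) are checked directly. The base step $|S| = n-1$ is then just the transitive instance of the Key Claim, realised by a single application of $f$ followed by a permutation, and the induction descends through all sizes down to the singletons.
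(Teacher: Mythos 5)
Your proposal is essentially correct, but there is nothing in this paper to compare it against: Theorem~\ref{thm:compl_reach} is imported verbatim from~\cite{lata1Hoffmann21} and the present paper gives no proof of it (it is used as a black box in the proof of Theorem~\ref{thm:sync_maximal_equals_primitive}). Judged on its own, your argument is sound. The reduction of $(2)\Leftrightarrow(3)$ to a change of notation and of $(2)\Rightarrow(1)$ to Theorem~\ref{thm:rystsov} via a constant map is exactly right. For $(1)\Rightarrow(2)$, your downward induction is clean: the analysis of a rank-$(n-1)$ map via its excluded point $e$ and duplicated value $v$ correctly identifies the reachable $(k-1)$-sets obtainable in one $f$-step as precisely those containing $v$ and avoiding $e$, and your Key Claim (for $a\ne b$ and proper non-empty $Y$ there is $g\in G$ with $g(a)\in Y$, $g(b)\notin Y$) is a correct consequence of Higman's orbital-graph criterion: vertex-transitivity forces equal in- and out-degrees, hence the two directed cut counts across $(Y,[n]\setminus Y)$ agree, and connectedness makes them positive. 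This is a genuinely self-contained route; the published proof in~\cite{lata1Hoffmann21} instead factors through the Bondar--Volkov machinery for completely reachable automata (strong connectivity of an auxiliary graph built from words of rank $n-1$), so your version trades that external characterization for the classical Higman criterion plus an elementary counting argument — arguably more transparent, at the cost of invoking a theorem the paper never states. One caveat: your claim that the cases $n\le 2$ "check directly" is false for the trivial subgroup of $S_2$, which this paper's convention declares primitive but for which a constant map $f$ reaches only one of the two singletons; this is a defect of the stated convention rather than of your argument (the paper itself notes that many authors exclude this group from primitivity), but you should either adopt that exclusion or restrict the equivalence to $n\ge 3$ rather than assert the degenerate case holds.
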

 
 In the same work~\cite{lata1Hoffmann21}, in analogy with
 Theorem~\ref{thm:rystsov}
 and ~\ref{thm:compl_reach},
 the \emph{sync-maximal permutation groups} were introduced.
 
\begin{definition}
\label{def:sync-maximal}
  A permutation group $G = \langle g_1, \ldots, g_k \rangle \le \mathcal S_n$ 
  is called \emph{sync-maximal}, if for every map $f \colon [n] \to [n]$ of rank $n - 1$,
  for the automaton $\mathcal A = (\Sigma, [n], \delta)$
  with $\Sigma = \{g_1, \ldots, g_k, f\}$ and
  $\delta(m, g) = g(m)$ for $m \in [n]$ and $g \in \Sigma$,
  the minimal automaton of $\Syn(\mathcal A)$
  has $2^n - n$ states.
\end{definition}
 
 This definition is independent of the choice of generators for $G$.
 In purely combinatorial language, using the characterization
 of the minimal automaton, applied to $\mathcal P_{\mathcal A}$, this means the following.
 
 \begin{theoremrep}
  A permutation group $G \le S_n$ is sync-maximal if and only if 
  for every transformation $f : [n] \to [n]$ of rank $n - 1$,
  we have that 
  (1) for every non-empty subset $S \subseteq [n]$ of size at least two\footnote{As $f$ has rank $n - 1$, this also implies
  that at least one singleton set is reachable. In fact, even more holds true, with~\cite[Lemma 4.1]{lata1Hoffmann21}
  we can deduce that $G$ is transitive and so for every non-empty $S \subseteq [n]$ there exists $h \in \langle G \cup \{f\}\rangle$
  with $S = h([n])$.} there exists $h \in \langle G \cup \{f\}\rangle$ such that $S = h([n])$, and
  (2) for two distinct non-empty and non-singleton subsets of $[n]$, there exists a transformation
   in $\langle G \cup \{f\} \rangle$ mapping precisely one to a singleton but not the other.
 \end{theoremrep}
 \begin{proof}
  As $G$ is finite, there exists a generating set $g_1, \ldots, g_k$.
  If we associate the semi-automaton $\mathcal A = (\Sigma, [n], \delta)$ 
  with $\Sigma = \{g_1, \ldots, g_k, f\}$ and $\delta(m, g) = g(m)$
  for $g \in \Sigma$ and $m \in [n]$ to $G$,
  then, as $G$ is sync-maximal, the minimal automaton for $\Syn(\mathcal A)$
  has $2^n - n$ states.
  Hence, in $\mathcal P_{\mathcal A}$
  all non-empty and non-singleton sets are reachable
  and at least one singleton set, which gives the first condition,
  and every two distinct non-empty and non-singleton sets are distinguishable
  in $\mathcal P_{\mathcal A}$, which gives the second claim.

  Conversely, suppose conditions (1) and (2) hold true
  and let $f : [n] \to [n]$ be of rank $n - 1$.
  If we associate an automaton $\mathcal A$ with $G$ and $f$
  as before, then condition (2) says that
  that all states in $\mathcal P_{\mathcal A}$
  are distinguishable.
  Condition (1) implies that every subset of size at
  least two is reachable in $\mathcal P_{\mathcal A}$.
  Let $a,b \in [n]$ be such that $f(a) = f(b)$
  and $h \in \langle G \cup \{f\}\rangle$
  with $h([n]) = \{a,b\}$.
  Then, $|f(h([n]))| = 1$
  and so at least one singleton set is reachable.
  In fact, by Lemma~\cite[Lemma 4.1]{lata1Hoffmann21},
  $G$ is transitive and so every subset is reachable
  in $\mathcal P_{\mathcal A}$.\qed
 \end{proof}

 In~\cite{lata1Hoffmann21} it was shown that every sync-maximal permutation groups is primitive.
 The main result of the present work is that the converse implication holds true.

 \begin{proposition}[\cite{lata1Hoffmann21}]\label{prop:sync_max_implies_primitive}
  Every sync-maximal permutation group is primitive.
 \end{proposition}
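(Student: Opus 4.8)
The plan is to prove the contrapositive: if $G \le \mathcal{S}_n$ is \emph{not} primitive, then $G$ is not sync-maximal. By the combinatorial characterization of sync-maximality established above, it suffices to exhibit a single map $f : [n] \to [n]$ of rank $n-1$ together with two distinct non-empty, non-singleton subsets $S_1, S_2 \subseteq [n]$ that are not distinguishable in $\langle G \cup \{f\}\rangle$ --- that is, for which condition~(2) fails. I may assume $n \ge 3$, since for $n \le 2$ every group is primitive and the statement is vacuous.

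Since $G$ is not primitive, there is a block $\Delta \subseteq [n]$ with $1 < |\Delta| < n$ satisfying $g(\Delta) = \Delta$ or $g(\Delta) \cap \Delta = \emptyset$ for every $g \in G$. First I would record the standard fact that the $G$-images $\Delta_1 = \Delta, \Delta_2, \ldots, \Delta_k$ are pairwise equal or disjoint, hence partition $U = \bigcup_{g \in G} g(\Delta)$, and that $G$ permutes them; note each $|\Delta_i| = |\Delta| \ge 2$. Now I fix two distinct points $a, b \in \Delta$ and a point $z \notin \Delta$ (both exist because $2 \le |\Delta| < n$), and let $f$ be the idempotent of rank $n-1$ given by $f(a) = f(b) = b$ and $f(x) = x$ for $x \ne a$.

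The heart of the argument is the invariant
\[
  \mathcal{I} = \{\, S \subseteq [n] : |S \cap \Delta_i| \le 1 \text{ for all } i \,\},
\]
the family of partial transversals of the block orbit. I would verify two closure properties. Each $g \in G$ permutes the $\Delta_i$ and thus maps $\mathcal{I}$ into itself while preserving cardinality. For $f$: any $S \in \mathcal{I}$ has $|S \cap \Delta_1| \le 1$, so $S$ cannot contain both $a$ and $b$; hence $f$ is injective on $S$ (its only coincidence is the pair $a,b$), giving $|f(S)| = |S|$, and since $f$ merely replaces $a$ by $b$ inside $\Delta_1$ and fixes every other point, $f(S)$ again meets each $\Delta_i$ in at most one point. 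Therefore $\langle G \cup \{f\}\rangle$ preserves $\mathcal{I}$ and acts on it by cardinality-preserving maps, so no element of $\mathcal{I}$ of size at least $2$ can ever be sent to a singleton.

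To finish, I take $S_1 = \{a, z\}$ and $S_2 = \{b, z\}$. Each lies in $\mathcal{I}$ (it meets $\Delta_1$ in exactly one point and every other block in at most one), both have size $2$, and they are distinct. By the invariant, for every $h \in \langle G \cup \{f\}\rangle$ the images $h(S_1)$ and $h(S_2)$ again have size $2$, so neither is ever a singleton; consequently no $h$ maps precisely one of $S_1, S_2$ to a singleton, the two sets are indistinguishable, and condition~(2) fails for this $f$. Hence $G$ is not sync-maximal. The one genuinely delicate point is the closure of $\mathcal{I}$ under $f$: this is exactly where it matters that the collapsed points $a, b$ are chosen \emph{inside the same block}, for then a partial transversal can never contain both of them and $f$ can never shrink it.
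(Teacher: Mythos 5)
Your proof is correct. The paper itself does not reprove this proposition---it is quoted from~\cite{lata1Hoffmann21}---so there is no in-text argument to compare against; but the shortest derivation available from the results restated in this paper is quite different from yours: sync-maximality forces the minimal automaton of $\Syn(\mathcal A)$ to have $2^n-n\ge 2$ states for $n\ge 2$, hence $\Syn(\mathcal A)\neq\emptyset$, hence $\langle G\cup\{f\}\rangle$ contains a constant map for every rank-$(n-1)$ map $f$, and Theorem~\ref{thm:rystsov} immediately yields primitivity; in other words, only the reachability half of the combinatorial characterization is needed. Your argument instead attacks the distinguishability half directly: from an imprimitivity block $\Delta$ you build the invariant family $\mathcal I$ of partial transversals of the block system $\{g(\Delta)\mid g\in G\}$, check that it is preserved with cardinality by every generator (the crucial point, which you correctly isolate, being that the two collapsed points of $f$ lie in the same block, so no partial transversal contains both), and conclude that $\{a,z\}$ and $\{b,z\}$ can never be mapped to sets of different sizes, let alone have exactly one collapsed to a singleton. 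This is more work but buys more: it is self-contained (no appeal to Rystsov's theorem), and it shows that for an imprimitive group condition~(2) of the combinatorial characterization already fails for a single well-chosen $f$, not merely condition~(1). The individual steps all check out: the $G$-images of $\Delta$ are pairwise equal or disjoint whether or not $G$ is transitive, $f$ is injective on every set of $\mathcal I$, the chosen witnesses lie in $\mathcal I$ because $z\notin\Delta$, and the reduction from ``minimal automaton has fewer than $2^n-n$ states'' to ``some pair of non-singleton sets is indistinguishable or unreachable'' is exactly the characterization proved earlier in the paper.
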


 In~\cite[Lemma 3.1]{lata1Hoffmann21} it was shown that distinguishability
 of all sets reduces to distinguishablity of the $2$-subsets.
 Formulated without reference to automata, this gives the next result.
 
 \begin{theoremrep}[\cite{lata1Hoffmann21}]
 \label{thm:2-set_distinguishable}
 \todo{brauche $\mathcal T_n$ doch nur hier?}
  Let $M \le \mathcal T_n$ be a transformation monoid.
  Then, for every two distinct non-empty and non-singleton  $S, T \subseteq [n]$
  there exists a transformation in $M$ mapping precisely one to a singleton but not the other
  if and only if this condition holds true for every two distinct $2$-subsets of $[n]$.
 \end{theoremrep}
 \begin{proof}
  The method of proof used in~\cite[Lemma 3.1]{lata1Hoffmann21}
  works in the general case of transformation monoids. However, we give a simple self-contained (and 
  shorter than the original) proof here\footnote{This argument, for which I am thankful, was communicated to me by an anonymous reviewer of the present work.}.
  Assume the claim holds true for the $2$-subsets of~$[n]$.
  We do induction on $|S| + |T|$.
  As by assumption $\min\{|S|, |T|\}\ge 2$,
  we have $|S| + |T| \ge 4$.
  If $|S| + |T| = 4$, then both 
  are $2$-subsets and this is precisely the assumption.
  If $|S| + |T| > 4$, then, as $S \ne T$, there exist
  $\{a,b\} \subseteq S$ and $\{c,d\} \subseteq T$
  with $\{a,b\} \ne \{c,d\}$.
  By assumption there exists $g \in M$
  mapping precisely one to a singleton. Without loss of generality, suppose
  this is $\{a,b\}$. Then $|g(S)| < |S|$
  and $|g(S)| + |g(T)| < |S| + |T|$.
  Then, either $g(S)$ is a singleton set and $|g(T)| \ge 2$
  or, by the induction hypothesis, we find $h \in M$ mapping precisely one
  of $g(S)$ and $g(T)$ to a singleton set.
  In this case, $hg$ maps precisely one of $S$ and $T$ to a singleton set.\qed 
 \end{proof}

 
The next lemma is obvious, as it basically states the definition of injectivity
for the restriction of a function to a subset, and stated for reference.

\begin{lemma}
\label{lem:transformation_acts_injective}
 Let $f : [n] \to [n]$ and $S \subseteq [n]$.
 Then, $|S \cap f^{-1}(x)| \le 1$ for each $x \in [n]$
 if and only if $f$ acts injective on $S$, i.e., $|f(S)| = |S|$.
\end{lemma}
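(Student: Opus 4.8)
The plan is to reduce the equivalence to the elementary fact that the fibers of $f$ partition $[n]$, and in particular partition $S$. Concretely, I would start from the decomposition $S = \bigcup_{x \in f(S)} (S \cap f^{-1}(x))$, which is a disjoint union since distinct fibers $f^{-1}(x)$ are disjoint. Here each fiber $S \cap f^{-1}(x)$ with $x \in f(S)$ is non-empty (by definition of $f(S)$), while $S \cap f^{-1}(x) = \emptyset$ whenever $x \notin f(S)$. Taking cardinalities and using disjointness yields the counting identity $|S| = \sum_{x \in f(S)} |S \cap f^{-1}(x)|$, a sum of exactly $|f(S)|$ terms, each an integer $\ge 1$.

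For the forward direction I would assume $|S \cap f^{-1}(x)| \le 1$ for all $x \in [n]$. Since each of the $|f(S)|$ summands indexed by $x \in f(S)$ is both $\ge 1$ and $\le 1$, every term equals exactly $1$, and the counting identity collapses to $|S| = |f(S)|$, which is the claimed injectivity of $f$ on $S$.

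For the converse I would assume $|f(S)| = |S|$. Then the same counting identity expresses $|S|$ as a sum of $|f(S)| = |S|$ terms, each a positive integer; a sum of $k$ positive integers can equal $k$ only if every term is $1$. Hence $|S \cap f^{-1}(x)| = 1$ for $x \in f(S)$ and $|S \cap f^{-1}(x)| = 0$ otherwise, so $|S \cap f^{-1}(x)| \le 1$ for every $x \in [n]$, as required.

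There is no genuine obstacle; the only point to keep in mind is finiteness, which guarantees that the cardinalities are well behaved and that a sum of positive integers equalling the number of summands forces each summand to be $1$. This is immediate because $[n]$, and hence $S$ and $f(S)$, are finite.
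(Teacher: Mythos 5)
Your argument is correct: the fiber decomposition $S = \bigcup_{x \in f(S)} (S \cap f^{-1}(x))$ into disjoint non-empty pieces, together with the resulting counting identity, cleanly gives both directions. The paper itself offers no proof --- it declares the lemma obvious, as it essentially restates what injectivity of $f|_S$ means --- so your write-up is simply a careful formalization of that observation, and nothing in it conflicts with the paper's intent.
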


\section{Sync-Maximal Permutation Groups Equal Primitive Permutation Groups -- The Proof}

 Here, we will prove the following theorem.
 
  \begin{theorem}
 \label{thm:sync_maximal_equals_primitive}
  Let $G = \langle g_1, \ldots, g_k \rangle \le S_n$.
  Then the following are equivalent:
  \begin{enumerate}
  \item $G$ is primitive;
  
  \item for every transformation $f : [n] \to [n]$ of rank $ n - 1$
   and $\{a,b\},\{c,d\} \subseteq [n]$
   with $\{a,b\}\ne \{c,d\}$,
   there exists $g \in \langle G \cup \{f\} \rangle$ such
   that precisely one of the subsets $g(\{a,b\})$ and $g(\{c,d\})$
   is a singleton but not the other;
   
  \item $G$ is sync-maximal. \todo{reihenfolge ändern?}
   
  \end{enumerate} 
 \end{theorem}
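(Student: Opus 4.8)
The plan is to establish the equivalence through the cycle $(3)\Rightarrow(1)$, $(3)\Rightarrow(2)$, $(2)\Rightarrow(1)$ together with the main new contribution $(1)\Rightarrow(2)$, from which $(1)\Rightarrow(3)$ follows by combining it with the known reachability results. The implication $(3)\Rightarrow(1)$ is exactly Proposition~\ref{prop:sync_max_implies_primitive}, and $(3)\Rightarrow(2)$ is immediate from the combinatorial characterization of sync-maximality: for every $f$ of rank $n-1$ all distinct non-empty non-singleton subsets are distinguishable in $\langle G\cup\{f\}\rangle$, and two distinct $2$-subsets are a special case. Throughout, for a rank-$(n-1)$ map $f$ I write $\kappa=\{a_0,b_0\}$ for its unique kernel pair, i.e.\ the only pair with $f(a_0)=f(b_0)$; note that a permutation never changes the size of a set, and that $f$ maps a $2$-subset $R$ to a singleton if and only if $R=\kappa$.

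For $(2)\Rightarrow(1)$ I would argue by contraposition. Assume $G$ is imprimitive (so $n\ge 3$) and fix a non-trivial block system with blocks $B_1,\dots,B_m$ of common size $b\ge 2$ and $m\ge 2$. Choose $f$ of rank $n-1$ that merges two points inside $B_1$ and acts as the identity elsewhere. Then $f$ maps every block into itself, so every element of $\langle G\cup\{f\}\rangle$ induces a bijection on the set of blocks; consequently two points lying in different blocks are never mapped to a common point, i.e.\ every cross-block $2$-subset is non-collapsible. Picking $p\in B_1$ and two distinct $q,q'\in B_2$ yields distinct non-collapsible $2$-subsets $\{p,q\}$ and $\{p,q'\}$; since no element of the monoid maps either to a singleton, they cannot be distinguished, so condition~(2) fails for this $f$. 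Hence $(2)$ forces primitivity.

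The heart of the argument is $(1)\Rightarrow(2)$: if $G$ is primitive then, for every $f$ of rank $n-1$, any two distinct $2$-subsets are distinguishable in $M:=\langle G\cup\{f\}\rangle$. By Theorem~\ref{thm:rystsov}, $M$ is synchronizing, so every $2$-subset is collapsible. I would introduce the relation $\sim$ on $2$-subsets, where $P\sim Q$ means that no element of $M$ maps exactly one of them to a singleton (indistinguishability), and first check that $\sim$ is an $M$-congruence: if $P\sim Q$ and $g\in M$ collapses neither, then $gP\sim gQ$, while if $g$ collapses one it collapses both. The key structural observation is that, because the only gateway to a singleton is an application of $f$ to $\kappa$, tracing a word that collapses $P$ and using the congruence shows that whenever $P\sim Q$ with $P\neq Q$ there is a word $u\in M$ with $uP=uQ=\kappa$, with $u$ injective on $P$ and on $Q$; in particular $u$ merges some pair of points drawn from $P\cup Q$. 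The goal is then to convert a non-trivial $\sim$ into a non-trivial $G$-invariant partition of $[n]$, contradicting primitivity: the natural candidate is the $G$-invariant equivalence generated by declaring $y\approx z$ whenever $\{x,y\}\sim\{x,z\}$ holds for some point $x$, which is non-trivial precisely because $\sim$ is.

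I expect this last step to be the main obstacle. The relation $\sim$ lives on $\binom{[n]}{2}$, and the $G$-action there can be imprimitive even when $G$ is primitive on points, so $G$-invariance of $\sim$ alone does not suffice; the specific combinatorics of $f$ — its unique kernel pair $\kappa$ and its unique omitted point — must be exploited both to reduce a $\sim$-related pair to the overlapping case $|P\cap Q|=1$ (by applying $f$ to create an overlap without collapsing the two $2$-subsets onto each other) and to exclude the degenerate outcome in which $\approx$ merges all of $[n]$ into a single class, i.e.\ to prove that the induced point partition is proper. Once $(1)\Rightarrow(2)$ is in place, $(1)\Rightarrow(3)$ follows: primitivity yields complete reachability and hence reachability of every subset of size at least two by Theorem~\ref{thm:compl_reach}, while $(2)$ together with Theorem~\ref{thm:2-set_distinguishable} upgrades $2$-subset distinguishability to distinguishability of all distinct non-empty non-singleton subsets; these are exactly the two conditions of the combinatorial characterization of sync-maximality, so $G$ is sync-maximal. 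This closes the cycle and gives the stated equivalence.
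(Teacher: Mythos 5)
Your overall architecture is sound, and two of your implications are fine: $(3)\Rightarrow(1)$ is indeed just Proposition~\ref{prop:sync_max_implies_primitive}, and your contrapositive argument for $(2)\Rightarrow(1)$ via a block system is a correct alternative to the paper's route (the paper instead derives a constant map from condition (2) and invokes Theorem~\ref{thm:rystsov}); just note that an imprimitive group need not be transitive, so the invariant partition need not have blocks of common size --- if only $B_1$ has size at least two you should take $\{p,q\}$ and $\{p',q\}$ with $p,p'\in B_1$ rather than two points in $B_2$. The real problem is that $(1)\Rightarrow(2)$, which is the entire content of the theorem (the paper's Proposition~\ref{prop:prim_grp_distinguishes_2_sets}), is not proved. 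You set up an indistinguishability relation $\sim$ on $2$-subsets, verify it is an $M$-congruence, and then propose to descend to a $G$-invariant point partition via $y\approx z$ iff $\{x,y\}\sim\{x,z\}$ for some $x$ --- but you yourself flag that you cannot (a) reduce an arbitrary $\sim$-related pair to the overlapping case $|P\cap Q|=1$, nor (b) rule out that the generated equivalence collapses all of $[n]$, i.e.\ that the partition is proper. Without both of these the contradiction with primitivity never materializes, and neither step is routine; properness in particular is exactly where the difficulty of the theorem is hiding. A sketch that names its own missing lemmas is not a proof.

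The paper closes this gap by a completely different and more elementary device. After reducing to an idempotent $f$ with $f(0)=f(1)=1$ (Lemma~\ref{lem:idempotent}), it takes a \emph{shortest} word $h=fu_mf u_{m-1}\cdots u_2fu_1$ collapsing $\{a,b\}$ and assumes, for contradiction, that every prefix $h_i=fu_i\cdots fu_1$ fails to separate the two sets. Minimality forces $0\in g_i(\{a,b\})\cap g_i(\{c,d\})$ and hence $1\notin g_i(\{a,b\})\cup g_i(\{c,d\})$ at every intermediate stage, so $f$ acts injectively there; a backward induction using Lemma~\ref{lem:transformation_acts_injective} then peels the word off letter by letter, starting from $g_m(\{a,b\})=g_m(\{c,d\})=\{0,1\}$, and concludes $\{a,b\}=\{c,d\}$, a contradiction. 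Your observation that a collapsing word must pass both sets through the kernel pair $\kappa$ is essentially the base case of that induction, so you are one idea away --- propagate the equality backward through the word rather than forward into an invariant partition --- but as written the proposal does not establish the key implication.
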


 At the heart of our result is the following statement.
 
\begin{proposition} 
\label{prop:prim_grp_distinguishes_2_sets}
 Let $G \le S_n$ be a permutation group
 and $f : [n] \to [n]$ be an idempotent map of rank $n - 1$.
 Suppose $\langle G \cup \{f\} \rangle$ contains a constant map.
 Then, for all $\{a,b\}, \{c,d\} \subseteq [n]$
 with $\{a,b\} \ne \{c,d\}$
 there exists a transformation in $\langle G \cup \{f\} \rangle$
 mapping precisely one set to a singleton but not the other.
\end{proposition}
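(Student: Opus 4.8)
The plan is to turn the statement into a shortest‑path (distance) argument on the collection of $2$-subsets. First I would record the structure of an idempotent $f$ of rank $n-1$: there is a unique pair $\kappa = \{p,q\}$ with $f(p) = f(q)$, and we may normalize $f(p) = q$, $f(q) = q$, with $f$ the identity on every other point. Two consequences will be used throughout. First, a $2$-subset $X$ is sent to a singleton by $f$ exactly when $X = \kappa$; consequently, by Theorem~\ref{thm:pair_sync_criterion} and the shape of the generators, the only way any element of $\langle G \cup \{f\}\rangle$ can turn a $2$-subset into a singleton is to steer it onto $\kappa$ and then apply $f$. Second, $p \notin f([n])$, so no image of $f$ contains $p$, and in particular $f$ fixes every $2$-subset that avoids $p$.

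Next I would attach to each $2$-subset $X$ a distance $d(X)$: the least number of generator applications needed to move $X$ onto $\kappa$ through $2$-subsets only, i.e. the shortest‑path distance in the digraph whose vertices are the $2$-subsets and whose edges are $X \to h(X)$ for generators $h$ with $|h(X)| = 2$. Since $\langle G \cup \{f\}\rangle$ contains a constant map, Theorem~\ref{thm:pair_sync_criterion} shows every $2$-subset can be collapsed; as the collapsing step is an application of $f$ to $\kappa$, every $X$ can reach $\kappa$ and hence $d(X)$ is finite, with $d(\kappa) = 0$.

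The heart of the proof is then an induction on $m = \min\{d(\{a,b\}), d(\{c,d\})\}$. Writing $A, B$ for the two distinct $2$-subsets and assuming, without loss of generality, $d(A) \le d(B)$, the base case $d(A) = 0$ is immediate: then $A = \kappa$ and $B \ne \kappa$, so $f$ collapses $A$ while keeping $B$ a $2$-subset. For the inductive step I would choose a generator $h$ giving a shortest‑path step for the \emph{closer} set $A$, that is $|h(A)| = 2$ and $d(h(A)) = d(A) - 1$, and apply $h$ to both sets. If $h(B)$ is a singleton we are finished, because $h(A)$ is still a $2$-subset; if $h(A)$ and $h(B)$ are distinct $2$-subsets we recurse, since $\min\{d(h(A)), d(h(B))\} \le d(h(A)) = m - 1$.

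The single remaining possibility, and the step I expect to be the real obstacle, is that $h$ merges the two sets, $h(A) = h(B)$. As the elements of $G$ are injective this forces $h = f$, and a short analysis of the structure of $f$ shows the sets must then have the dangerous form $A = \{p,y\}$, $B = \{q,y\}$ with $y \notin \{p,q\}$ and $f(A) = f(B) = \{q,y\}$. The key observation is that this case cannot arise under the minimality choice: here $h = f$ is a shortest‑path step for $A$, which contains $p$, so $f(A) = \{q,y\} = B$ and $d(B) = d(A) - 1 < d(A)$, contradicting $d(A) \le d(B)$ (equivalently, $f$ fixes the $p$-avoiding set $B$ and so is never a distance‑reducing step for it). Thus always driving the set nearer to $\kappa$ takes a genuine, non‑merging step, the induction closes, and we obtain a transformation in $\langle G \cup \{f\}\rangle$ separating any two distinct $2$-subsets, which is exactly the assertion of the proposition.
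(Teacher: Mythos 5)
Your argument is correct, and it takes a recognizably different route from the paper's, even though both rest on the same two ingredients: minimality of a collapsing sequence and the fact that the idempotent $f$ merges exactly one pair $\kappa=\{p,q\}$ and acts as the identity off $p$. The paper fixes one globally shortest word $h = fu_m f \cdots u_2 f u_1$ collapsing $\{a,b\}$, assumes no prefix separates the two sets, and then propagates the equality $g_j(\{a,b\}) = g_j(\{c,d\})$ \emph{backwards} from $j=m$ down to $j=1$ (invoking Lemma~\ref{lem:transformation_acts_injective} at each step) to conclude $\{a,b\}=\{c,d\}$, a contradiction. You instead introduce a per-set distance to $\kappa$ in the digraph of $2$-subsets and induct \emph{forwards} on the minimum of the two distances, so that the only dangerous event -- a merge $h(A)=h(B)$ -- is confined to a single step, where it dies immediately: a merge forces $h=f$, forces $p\in A$ (else $f$ would not decrease $d(A)$) and $p\notin B$ (else $A=B$), hence $B=f(B)=f(A)$ and $d(B)=d(A)-1$, contradicting the choice of $A$ as the closer set. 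Your three cases (separate, recurse, merge) are exhaustive, the base case $A=\kappa$ is right, and the finiteness of $d$ does follow from the presence of a constant map, so there is no gap. What your version buys is a local, one-step contradiction with no backward induction through the intermediate images and no need for the auxiliary injectivity lemma; what the paper's version buys is that it dispenses with the digraph and distance-function machinery and works with a single minimal word throughout.
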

\begin{proof}
 Suppose $G = \langle g_1, \ldots, g_k \rangle$.
 Let $\{a,b\}, \{c,d\} \subseteq [n]$ be two distinct $2$-sets
 and $f : [n] \to [n]$ being idempotent and of rank $n - 1$.
 Without loss of generality, we can suppose $f(0) = f(1) = 1$ and $f(i) = i$ for $i \in \{2,\ldots,n-1\}$.
 As $\langle G \cup \{f\} \rangle$ contains a constant map, 
 we can map $\{a,b\}$ to a singleton set.
 Choose a transformation $h \in \langle G \cup \{f\}\rangle$
 represented by a shortest possible word in the generators of $G$ and $f$ such that $|h(\{a,b\})| = 1$.
 Then, we can write $h = f u_m f u_{m-1} f \cdots u_2 f u_1$ with $u_i \in G$, $m \ge 1$.
 Note that, by the minimal choice, the transformation
 $f$ is applied at the end\footnote{Recall that by our convention, the leftmost function is applied last.}.
 If $h(\{c,d\})$ is not a singleton set, we are done. So, suppose
 $h(\{c,d\})$ is also a singleton set.
 For $i \in \{1,\ldots,m\}$, set $h_i = f u_i f u_{i-1} f \cdots u_2 f u_1$.
 Then, $h = h_m$.
 By minimality of the representation in the generators of $G$ and $f$,
 for all $i \in \{1,\ldots,m-1\}$ we have
 $|h_i(\{a,b\})| = 2$.
 Hence, if there exists $i \in \{1,\ldots,m-1\}$
 such that $h_i(\{c,d\})$ 
 is a singleton set, we are also done.
 So, suppose this is not the case.
 
 Set $g_i = u_i f u_{i-1} f \cdots u_2 f u_1$. Then, $h_i = fg_i$.
 We must have $0 \in g_i(\{a,b\})$ for all $i$,
 for otherwise, as $f$ acts as the identity on $\{1,\ldots,n-1\}$,
 we can leave $f$ out, i.e., $g_i(\{a,b\}) = h_i(\{a,b\})$, in the expression for $h$
 and get a shorter representing word, contradicting the minimal choice
 of the expression representing $h$.
 Similarly, $0 \in g_i(\{c,d\})$ for all $i$,
 as otherwise we can leave a single instance of $f$ out again and have a word that maps $\{c,d\}$
 to a singleton, but not $\{a,b\}$ by the minimal choice of $h$ in the length of a representing
 word.
 
 Note that, as $|g_m(\{a,b\})| = |g_m(\{c,d\})| = 2$, $|h_m(\{a,b\})| = |h_m(\{c,d\})| = 1$
 and $h_m = fg_m$, we must have
 $g_m(\{a,b\}) = g_m(\{c,d\}) = \{0,1\}$.
 
 \todo{direkt minimal wählen mit beide auf singleton und dann widerspruch?}
 Next, we will show by induction on $j \in \{1,\ldots,m\}$ that $g_j(\{a,b\}) = g_j(\{c,d\})$,
 where the base case is $j = m$.
 Then, $g_1(\{a,b\}) = g_1(\{c,d\})$
 implies $\{a,b\} = \{c,d\}$
 as $g_1 = u_1 \in G$ is a permutation.
 However, this is a contradiction
 as both $2$-sets are assumed to be distinct.
 Hence, the case that $h_m$ maps both to a singleton
 and $h_i$ for all $i \ne m$ maps both to $2$-sets
 is not possible.
 As noted, for the base
 case we have
 $g_m(\{a,b\}) = g_m(\{c,d\}) = \{0,1\}$.
 Now suppose $j \in \{1,\ldots,m-1\}$
 and $g_{j+1}(\{a,b\}) = g_{j+1}(\{c,d\})$.
 Then, as $g_{j+1} = u_{j+1} f g_j = u_{j+1} h_j$,
 we can deduce $h_j(\{a,b\}) = h_j(\{c,d\})$
 as they only differ by the application of the permutation $u_{j+1} \in G$.
 As written above $0 \in g_j(\{a,b\}) \cap g_j(\{c,d\})$.
 This implies,
 as $|h_j(\{a,b\})| = |h_j(\{c,d\})| = 2$,
 that $1 \notin g_j(\{a,b\}) \cup g_j(\{c,d\})$.
 So, $|f^{-1}(x) \cap g_j(\{a,b\})| \le 1$
 and $|f^{-1}(x) \cap g_j(\{c,d\})| \le 1$ for every $x \in [n]$.
 As $h_j = fg_j$, we can write $f(g_j(\{a,b\})) = f(g_j(\{c,d\}))$.
 Applying Lemma~\ref{lem:transformation_acts_injective} then yields $g_j(\{a,b\}) = g_j(\{c,d\})$.\qed 
\end{proof}
 
 The following lemma allows us to assume
 the transformation of rank $n - 1$ in Theorem~\ref{thm:sync_maximal_equals_primitive} is idempotent.
 
\begin{lemmarep}
\label{lem:idempotent}
 Let $G \le S_n$ be a transitive permutation group and $f : [n] \to [n]$
 be a transformation of rank $n - 1$. Then, there exists
 an idempotent transformation $h : [n] \to [n]$ of rank $n - 1$
 such that $\langle G \cup \{ h \} \rangle \le \langle G \cup \{f\} \rangle$.
\end{lemmarep}
\begin{proof}
 Let $f : [n] \to [n]$ be a transformation of rank $n - 1$. 
 Let $a \in [n]$ be such that $f([n]) = [n] \setminus \{a\}$.
 As $G$ is transitive, 
 there exists $g \in G$ such that $a \notin g(f([n]))$
 and $gf$ permutes $[n] \setminus \{a\}$. Then, some power
 of $gf$ acts as the identity on $[n] \setminus \{a\}$, i.e., is idempotent.\qed
\end{proof}
 
 So, now we have everything together to prove Theorem~\ref{thm:sync_maximal_equals_primitive}.
 
 \begin{proof}[Proof of Theorem~\ref{thm:sync_maximal_equals_primitive}]
  We can assume $n > 2$, as we have not included the assumption
  of transitivity in our definition of primitivity (which is implied for $n > 2$, see~\cite{cameron_1999})
  and so, for $n \le 2$ every subgroup is primitive and also fulfills the second condition vacuously, 
  as then we cannot find two distinct $2$-sets.
  Also, for $n \le 2$, every group is sync-maximal, as is easily seen
  by case analysis.

  So first, let $G = \langle g_1, \ldots, g_k \rangle \le S_n$
  be primitive and suppose $f : [n] \to [n]$
  is a transformation of rank $n - 1$. 
  By Lemma~\ref{lem:idempotent}, there exists 
  an idempotent transformation $f' \in \langle G \cup \{f\}\rangle$.
  By Theorem~\ref{thm:rystsov}
  in $\langle G \cup \{f'\} \rangle$
  we find a constant map. Then, by Proposition~\ref{prop:prim_grp_distinguishes_2_sets},
  for distinct $2$-sets there exists an element in $\langle G \cup \{f'\} \rangle \subseteq \langle G \cup \{f\}\rangle$
  mapping precisely one of both $2$-sets to a singleton set.

  Now, suppose the second condition holds true. 
  First, note that the second condition implies for $n > 2$ and $\{a,b\} \subseteq [n]$
  that
  there must exist $g \in G$
  such that $g(\{a,b\}) \ne \{a,b\}$.
  Assume this is not the case. Then, for $c \notin \{a,b\}$,
  we have 
  \begin{equation}
  \label{eqn:image_acbc}
      \{ g(\{a,c\}), g(\{b,c\}) \} \cap \{ \{a,b\} \} = \emptyset
  \end{equation}
  for every $g \in G$ and, more generally,
  we have $\{ g(\{d,e\}), g(\{d',e'\}) \} \cap \{ \{a,b\} \} = \emptyset$
  for every $g \in G$ and $\{d,e\}, \{d',e'\}$ not equal to $\{a,b\}$.
  Choose $c \in [n]\setminus\{a,b\}$
  and $f' : [n] \to [n]$ idempotent of rank $n - 1$ with $f'(a) = f'(b) = b$.
  Then, with Equation~\eqref{eqn:image_acbc} we can deduce
  $\{ h(\{a,c\}), h(\{b,c\}) \} \cap \{ \{a,b\} \} = \emptyset$
  for $h \in \langle G \cup \{f'\} \rangle$.
  So, it is not possible to map one of $\{a,c\}$ and $\{b,c\}$
  to a singleton set.
  But this is excluded by assumption, so there
  must exist an element in $G$ mapping $\{a,b\}$
  to a different $2$-subset.
  
  So, now let $f : [n] \to [n]$
  be an arbitrary transformation of rank $n - 1$ and
  $\mathcal A = (\Sigma, Q, \delta)$
  be the automaton with $\Sigma = \{g_1, \ldots, g_k, f\}$, $Q = [n]$
  and $\delta(i, x) = x(i)$ for $i \in [n]$ and $x \in \Sigma$.
  Then, the second condition precisely says that all non-empty
  $2$-sets are distinguishable in~$\mathcal P_{\mathcal A}$.
  With Theorem~\ref{thm:2-set_distinguishable}, then all non-empty subsets with at least two elements
  are distinguishable in $\mathcal P_{\mathcal A}$.
  So, we only need to show that all non-empty subsets with at least two elements are reachable
  and at least one singleton subset is reachable in $\mathcal A$.
  In fact, we will establish the stronger statement that $\mathcal A$
  is completely reachable. Let $\{a,b\} \subseteq [n]$
  be a $2$-subset. As shown above, we can choose $g \in G$
  with $g(\{a,b\}) \ne \{a,b\}$.
  Then, by assumption, there exists $h \in \langle G \cup \{f\}\rangle$
  such that precisely one of $h(\{a,b\})$ or $(hg)(\{a,b\})$
  is a singleton set.
  By Theorem~\ref{thm:pair_sync_criterion}, as $\{a,b\}$ was arbitrary, $\langle G \cup \{f\} \rangle$
  contains a constant map. As $f : [n] \to [n]$ was arbitrary of rank $n - 1$,
  by Theorem~\ref{thm:rystsov} the group $G$
  is primitive and so, by Theorem~\ref{thm:compl_reach}, the automaton $\mathcal A$
  is completely reachable.

  Finally, suppose the last condition is fulfilled, i.e, $G$ is sync-maximal.
  Then, by Proposition~\ref{prop:sync_max_implies_primitive}, $G$ is primitive.\qed
 \end{proof}
 
 Note that, by Lemma~\ref{lem:idempotent}, in the statements
 of Theorem~\ref{thm:compl_reach} and~\ref{thm:sync_maximal_equals_primitive}
 it is enough if the mentioned conditions hold for idempotent transformations
 of rank $n - 1$ only.

 \begin{toappendix} 
 The following separation result that will be needed
in the proof of Theorem~\ref{thm:char_prim_grp}.

\begin{theorem}~\cite{ArCamSt2017,neumann75b,birch2009}
\label{thm:separating_neumann}
 Let $G \le S_n$ be transitive and $A,B \subseteq [n]$ be
 such that $|A| \cdot |B| < n$. Then, there exists $g \in G$
 such that $g(A) \cap B = \emptyset$.
\end{theorem}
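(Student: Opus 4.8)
The plan is to prove this classical separation lemma by an averaging argument over the group, using transitivity to control the relevant counts. The idea is that if the average size of $g(A) \cap B$, taken over all $g \in G$, is strictly less than $1$, then, since each individual value $|g(A)\cap B|$ is a non-negative integer, at least one $g$ must achieve $|g(A) \cap B| = 0$, which is exactly the desired conclusion.

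First I would compute the total $\sum_{g \in G} |g(A) \cap B|$ by double counting. Writing $|g(A)\cap B|$ as $\sum_{a \in A} \sum_{b \in B} [g(a) = b]$, where $[g(a)=b]$ equals $1$ if $g(a)=b$ and $0$ otherwise, and interchanging the order of summation, one obtains
\[
\sum_{g \in G} |g(A) \cap B| = \sum_{a \in A} \sum_{b \in B} \bigl| \{ g \in G : g(a) = b \} \bigr|.
\]
The key step is to evaluate the inner count. For fixed $a, b \in [n]$, since $G$ is transitive there is at least one $g_0 \in G$ with $g_0(a) = b$, and then $\{ g \in G : g(a) = b \}$ is precisely the coset $g_0 G_a$ of the point stabilizer $G_a = \{ g \in G : g(a) = a \}$. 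Hence this set has cardinality $|G_a|$, and by the orbit-stabilizer theorem together with transitivity (so that $|\Orb(a)| = n$) we have $|G_a| = |G| / n$, independently of the choice of $a$ and $b$.

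Substituting this uniform count back gives $\sum_{g \in G} |g(A) \cap B| = |A| \cdot |B| \cdot |G| / n$, so that the average value is
\[
\frac{1}{|G|} \sum_{g \in G} |g(A) \cap B| = \frac{|A| \cdot |B|}{n} < 1
\]
by the hypothesis $|A| \cdot |B| < n$. Since $|g(A) \cap B|$ is a non-negative integer for every $g \in G$, an average strictly below $1$ forces some $g \in G$ to satisfy $|g(A) \cap B| = 0$, which completes the proof. I do not anticipate a genuine obstacle here; the only slightly delicate point is the uniform count $\bigl|\{ g : g(a) = b\}\bigr| = |G|/n$, which is exactly where transitivity is indispensable — without it the inner count could vary with $a$ and $b$, the total would no longer factor as $|A|\cdot|B|\cdot|G|/n$, and the clean averaging bound would fail.
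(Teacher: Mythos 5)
Your proof is correct. The paper does not actually prove this statement---it imports it from the cited literature (it is Neumann's separation lemma)---so there is no in-paper argument to compare against; your averaging argument is precisely the classical proof found in those references. The double count $\sum_{g \in G} |g(A) \cap B| = \sum_{a \in A}\sum_{b \in B}|\{g \in G : g(a)=b\}|$ is valid (injectivity of each $g$ guarantees the bijection between elements of $g(A)\cap B$ and pairs $(a,b)$ with $g(a)=b$), the coset/orbit-stabilizer step correctly uses transitivity to get the uniform count $|G|/n$, and the conclusion from an average strictly below $1$ is sound. Nothing is missing.
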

 \end{toappendix}
 
 With a little more work, we can show the following
 statement. By Lemma~\ref{lem:idempotent},
 we can always restrict to idempotent
 transformations for the mentioned characterizations\footnote{In case of Theorem~\ref{thm:compl_reach}, which
was proven in the conference version~\cite{lata1Hoffmann21}, this was communicated to me, for which I am thankful, by an anonymous referee of~\cite{lata1Hoffmann21}}, 
hence every statement entails two statements: one for all
transformations of rank $n - 1$ and one for idempotent transformations of rank $n - 1$ only. 
Both formulations are put into a single statement by putting the word ``idempotent''
in square brackets in Theorem~\ref{thm:char_prim_grp}.
 
 \begin{theoremrep}
\label{thm:char_prim_grp}
 Let $G \le S_n$ be a permutation group and $n \ge 5$. Then the following are equivalent:
  \renewcommand{\labelenumi}{(\arabic{enumi})}
 \begin{enumerate}
 \item \label{prop:primitive} $G$ is primitive;
 

   \item \label{prop:compl_reach} for every [idempotent] transformation $f \colon [n] \to [n]$
 of rank $n-1$, in the transformation semigroup $\langle G \cup \{ f \} \rangle$
 we find, for each non-empty $S \subseteq [n]$, an element $g \in \langle G \cup \{ f \} \rangle$
 such that $g([n]) = S$; 
  
  \item \label{prop:two_sets} for every [idempotent] transformation $f \colon [n] \to [n]$ 
   of rank $n-1$ and $2$-sets $\{a,b\}, \{c,d\} \subseteq [n]$ with $\{a,b\}\ne \{c,d\}$,
   there exists a transformation in $\langle G \cup \{f\} \rangle$
   mapping precisely one to a singleton, but not the other;

  \item \label{prop:any_sets} for every [idempotent] transformation $f \colon [n] \to [n]$ 
   of rank $n-1$ and two distinct non-empty and non-singleton subsets $S, T \subseteq [n]$,
   there exists a transformation in $\langle G \cup \{f\} \rangle$
   mapping one to a singleton but not the other;
   
  \item \label{prop:diff_size} for every [idempotent] transformation $f \colon [n] \to [n]$ 
   of rank $n-1$ and two distinct non-empty and non-singleton subsets $S, T \subseteq [n]$,
   there exists a transformation in $\langle G \cup \{f\} \rangle$
   mapping both to subsets of \emph{different cardinality};

   \item \label{prop:disjoint_two_sets} for every [idempotent] transformation $f \colon [n] \to [n]$ 
   of rank $n-1$ and two \emph{disjoint} non-empty $2$-sets $\{a,b\}, \{c,d\} \subseteq [n]$, 
   there exists a transformation in $\langle G \cup \{f\} \rangle$
   mapping precisely one to a singleton, but not the other.

%

 \end{enumerate}
\end{theoremrep}
\begin{proof}  

 By Lemma~\ref{lem:idempotent}, each of the properties 
 holds true if and only if it holds true for idempotent transformations of rank $n - 1$ only.

 \medskip 
 \noindent\emph{\eqref{prop:primitive} implies \eqref{prop:compl_reach}:}  This is implied by Theorem~\ref{thm:compl_reach}.
 
 \medskip
 \noindent\emph{\eqref{prop:compl_reach} implies \eqref{prop:two_sets}:} This is implied by Theorem~\ref{thm:sync_maximal_equals_primitive}.

 \medskip 
 \noindent\emph{\eqref{prop:two_sets} implies \eqref{prop:any_sets}:} This is implied by Theorem~\ref{thm:2-set_distinguishable}.

 \medskip 
 \noindent\emph{\eqref{prop:any_sets} implies \eqref{prop:diff_size}:} This is clear.

 \medskip 
 \noindent\emph{\eqref{prop:diff_size} implies  \eqref{prop:disjoint_two_sets}:} This is clear.

 \medskip 
 \noindent\emph{\eqref{prop:disjoint_two_sets} implies \eqref{prop:primitive}:} Let $f : [n] \to [n]$ be a transformation of rank $n - 1$.
 Let $\{a,b\} \subseteq [n]$
 be a $2$-set. As $n > 4$, by Theorem~\ref{thm:separating_neumann},
 there exists $g \in G$ such that $g(\{a,b\}) \cap \{a,b\} = \emptyset$.
 By Property~\eqref{prop:disjoint_two_sets} there exists $h \in \langle G \cup \{f\} \rangle$
 such that precisely one of $\{a,b\}, g(\{a,b\})$
 is mapped to a singleton set. Then, either $h$ or $gh$ collapse
 $\{a,b\}$. So, with Theorem~\ref{thm:pair_sync_criterion},
 the transformation monoid $\langle G \cup \{f\} \rangle$
 contains a constant map. Then, with Theorem~\ref{thm:rystsov},
 we find that $G$ is primitive.\qed
\end{proof}

\begin{toappendix}
\begin{remark}
 The assumption $n \ge 5$ is necessary for Theorem~\ref{thm:char_prim_grp} to hold true.
 For example, let $n = 4$ and consider the non-primitive permutation group $G = \langle g \rangle$  
 with $g(0) = 1, g(1) = 2, g(2) = 0$
 and $g(3) = 3$.
 Let $f : [4] \to [4]$ be a transformation of rank $3$
 and $\{a,b\}, \{c,d\}$ be two disjoint $2$-sets.
 The orbits on the $2$-sets are
 \begin{multline*}
  \{ \{0,3\}^h \mid h \in G \} = \{ \{0,3\}, \{1,3\}, \{2,3\} \}, \\
  \{ \{0,1\}^h \mid h \in G \} = \{ \{0,1\}, \{1,2\}, \{2,0\} \},
 \end{multline*}
 and every $2$-set is in one of both sets.
 So, as $\{a,b\}$ must be in a
 different orbit than $\{c,d\}$ as they are disjoint,
 we can find $fh$ such that precisely one of the two $2$-sets 
 $\{a,b\}$ and $\{c,d\}$
 is mapped to a singleton set. 
 However, note that Property~\eqref{prop:primitive} and
 Property~\eqref{prop:compl_reach} of Theorem~\ref{thm:char_prim_grp}
 are also equivalent for $n \in \{3,4\}$, see~\cite{lata1Hoffmann21}.
\end{remark}
\end{toappendix}



\section{Strongly Sync-Maximal Permutation Groups}

As the sync-maximal groups turned out to be precisely the primitive permutation groups, can we alter the definition to give us a new class of permutation groups related to the size of the minimal automata for the set of synchronizing words? One first approach might be to demand that, for each transformation of rank $n - k$, if we add this to the group, in the resulting transformation monoid every non-empty set of size at most $n - k$ is reachable and two distinct non-empty and non-singleton subsets $S, T$ of states with $|S|, |T| \in \{ m \mid m \le n - k \} \cup \{n\}$ can be mapped to sets of different cardinality.
However, it is easy to show that every such group is $k$-reachable
as introduced in~\cite{lata1Hoffmann21}.
So, also with the results from~\cite{lata1Hoffmann21},
for $6 \le k \le n - 6$ this condition
is only fulfilled by the symmetric or the alternating groups.

So, in the following definition, we only demand the distinguishability conditions, but not the reachability condition. Note that in the characterizations of primitive groups
given above, both conditions -- either distinguishability or reachability -- are equivalent if we add a transformation of rank $n - 1$.

\begin{definition}\todo{rank definieren.}
 A permutation group $G \le S_n$
 is called \emph{strongly sync-maximal}
 if for each transformation $f : [n] \to [n]$ of rank $r$
 with $2 \le r \le n-1$
 in $\langle G \cup \{f\} \rangle$
 all $2$-subsets are distinguishable.
\end{definition}

\begin{propositionrep}
 Every $4$-transitive group
 is strongly sync-maximal
 and every strongly sync-maximal group
 is primitive.
\end{propositionrep}
\begin{proof}
 Let $G \le S_n$ be $4$-transitive
 and $f : [n] \to [n]$
 be a transformation of rank $r$
 with $2 \le r \le n - 1$.
 Then, there exist $\{a,b\}, \{c,d\} \subseteq [n]$ with $|f(\{a,b\})| = 2$
 and $|f(\{c,d\})| = 1$.
 Now, suppose we have arbitrary distinct $2$-subsets $\{a_1, a_2\}, \{a_3,a_4\}$ of $[n]$.
 By assumption, there exists $g \in G$
 such that $g(\{a_1, a_2\}) = \{a,b\}$
 and $g(\{a_3, a_4\}) = \{c,d\}$.
 Then, $fg$ maps precisely one of these two $2$-subsets
 to a singleton but not the other.

 Now, suppose $G \le S_n$ is strongly sync-maximal.
 By Theorem~\ref{thm:sync_maximal_equals_primitive}, by only considering
 transformations of rank $n - 1$,
 we find that $G$ is primitive.\qed
\end{proof}

Whether the strongly sync-maximal groups
are properly contained between
the above mentioned groups is an open problem. If so, the precise relation
to the synchronizing groups and other classes of groups is an open problem
and remains for future work.

{\smallskip \noindent \footnotesize
\textbf{Acknowledgement.} I thank the anonymous referees for carefully reading through
the present work and pointing out unclear formulations and typos. I am also grateful to the referee that pointed me to additional work I was not aware of. Unfortunately, due to space, I could not discuss all of them, in particular the connections to decoders and probabilistic investigations on the length of synchronizing words. The results of this submission will be incorporated into the extended version of~\cite{lata1Hoffmann21}.
}

\bibliographystyle{splncs04}
\bibliography{short}
\end{document}